\newtheorem{theorem}{Theorem}
\newtheorem{Lemma}{Lemma}
\newtheorem{Corollary}{Corollary}
\newtheorem{lemma}[Lemma]{$\mathbf{Lemma}$}
\newtheorem{corollary}[Corollary]{$\mathbf{Corollary}$}
\begin{document}
\title{ \huge{      Delay Minimization for  NOMA-MEC Offloading  }}

\author{ Zhiguo Ding,  Derrick Wing Kwan Ng,    Robert Schober,  and H. Vincent Poor  \thanks{   \vspace{-1.5em}

   Z. Ding and H. V. Poor are  with the Department of
Electrical Engineering, Princeton University, Princeton,
USA.  Z. Ding  is  also  with the School of
Electrical and Electronic Engineering, the University of Manchester, Manchester,  UK. D. W. K. Ng is with the School of Electrical Engineering and
Telecommunications, University of New South Wales, Sydney, Australia.   
R. Schober is with the Institute for Digital Communications,
Friedrich-Alexander-University Erlangen-Nurnberg (FAU), Germany.
  
  }\vspace{-2.3em}} \maketitle

\begin{abstract}  
This paper considers the  minimization of the offloading delay for non-orthogonal multiple access assisted mobile edge computing (NOMA-MEC). By transforming  the delay minimization problem into a form of fractional programming,  two iterative algorithms based on Dinkelbach's method and Newton's method are proposed. The optimality  of both methods is proved and their convergence is compared. Furthermore, criteria for  choosing between   three possible modes, namely orthogonal multiple access (OMA), pure NOMA, and hybrid NOMA, for MEC offloading are   established.     \vspace{-1.7em}
\end{abstract} 
 
 \section{Introduction}
The application of non-orthogonal multiple access (NOMA) to mobile edge computing (MEC) has received considerable attention recently \cite{MECding,8269088,8267072,Zhiguo_MEC1}. In particular, the superior performance   of NOMA-MEC    with fixed    resource allocation was illustrated  in \cite{MECding}. In  \cite{8269088}, a  weighted sum-energy minimization problem was investigated in a multi-user NOMA-MEC system. In \cite{8267072}, the energy consumption of NOMA-MEC networks was minimized   assuming that each user has access to multiple bandwidth resource blocks.    In \cite{Zhiguo_MEC1}, joint power and time allocation was designed   for NOMA-MEC, again with the objective to minimize the offloading energy consumption. To the best of the authors' knowledge, the minimization of the offloading delay for NOMA-MEC has not yet been studied. 

The aim of this letter is to study   delay minimization for NOMA-MEC offloading. Compared to the energy minimization problems studied in \cite{8269088, 8267072, Zhiguo_MEC1},  minimizing the offloading delay is more challenging, since the delay is the ratio of two rate-related  functions.  We first transform the formulated delay minimization problem into a form of fractional programming. However,  the   transformed problem  is fundamentally different from the original fractional programming  problem   in \cite{Dinkelbach}. To this end, two algorithms based on Dinkelbach's method and Newton's method are proposed, and their optimality is rigorously proved. While for conventional fractional programming, the two methods are equivalent,   as for the problem considered in this paper,   Newton's method is proved to converge faster than   Dinkelbach's method. In addition, criteria for  choosing between   three possible modes, namely  OMA, pure NOMA, and hybrid NOMA (H-NOMA), for MEC offloading are   established.   Interestingly, we find that  pure NOMA can outperform H-NOMA when there is sufficient energy for MEC offloading, whereas H-NOMA always outperforms pure NOMA if the objective is to reduce the energy consumption as shown in~\cite{Zhiguo_MEC1}.

 \vspace{-0.3em}
\section{System Model} 
Consider an   MEC offloading  scenario, in which two  users, denoted by user $m$ and user $n$,   offload their computation tasks to an MEC server. Without loss of generality, assume that the two users' tasks contain the same number of nats, denoted by $N$, and user $m$'s computation deadline, denoted by $D_m$ seconds, is shorter    than user $n$'s, denoted by $D_n$ seconds, i.e., $D_m\leq D_n$.  
 
In this work, as user $m$ has a more stringent delay requirement than user $n$, user $n$ will be served in an   opportunistic manner  as described in the following. In particular, user $m$'s  transmit power, denoted by $P_m$, is set the same as in  OMA, i.e., $P_m$ satisfies  $
  D_m\ln(1+P_m|h_m|^2)=N$, 
where $h_{k}$ denotes user $k$'s channel gain, $k\in\{m,n\}$.  User $n$ is allowed to access the time slot of $D_m$ seconds allocated to user $m$, under the condition that user $m$ experiences the same rate as for   OMA.  As shown in \cite{Zhiguo_MEC1}, this can be realized if  user $m$'s  message is decoded after user $n$'s   at the MEC server  and    user $n$'s data rate, denoted by $R_n$, during $D_m$ is set  as follows: 
\begin{align}\label{upper bound rn}
R_n= \ln \left(1+\frac{P_{n,1}|h_n|^2}{P_m |h_m|^2+1}\right),
\end{align}
where $P_{n,1}$ denotes the power used by user $n$ during $D_m$. If user $n$ cannot finish its offloading within $D_m$, a dedicated time slot, denoted by $T_n$, is allocated to user $n$, and   the user's transmit power during $T_n$ is denoted by $P_{n,2}$.  Note that the three cases with $\{P_{n,1}=0, P_{n,2}\neq0\}$, $\{P_{n,1}\neq0, P_{n,2}=0\}$,    and $\{P_{n,1}\neq0, P_{n,2}\neq0\}$, correspond to OMA, pure NOMA, and H-NOMA, respectively. We note that both OMA and pure NOMA can be viewed as special cases of H-NOMA. However,   in this paper, the three modes are considered separately  and H-NOMA is restricted to the case with $\{P_{n,1}\neq0, P_{n,2}\neq0\}$. Similar to \cite{8269088, 8267072, Zhiguo_MEC1}, the time and energy costs for the users to download the computation outcomes from the MEC server are omitted, as they are negligibly small compared to the considered   uploading costs.

\section{ NOMA-Assisted MEC Offloading}
The considered opportunistic strategy can guarantee that user $m$'s delay performance in NOMA is the same as that in OMA, and   the   problem for minimizing user $n$' delay  can be formulated as follows: 
\begin{subequations}\label{5}
\begin{eqnarray}\label{5a}&
\underset{ P_{n,1},P_{n,2}}{\text{ minimize}}\quad& D_m+T_n\\ \label{5b}
&\text{s.t.} \quad& \left\{P_{n,1}  P_{n,2}\right\}\in\mathcal{S},
\end{eqnarray}
\end{subequations}
where      $T_n=\frac{N- D_m\ln\left(1+\frac{P_{n,1}|h_n|^2}{P_m |h_m|^2+1}\right)}{\ln\left(1+|h_n|^2P_{n,2}\right) }$, 
{\small \begin{align}\label{s1}
\mathcal{S}=\{ D_mP_{n,1}+T_nP_{n,2}\leq E, P_{n,1}\geq 0,P_{n,2}\geq 0, T_n\geq0\},\end{align}}
\hspace{-1em} and $E$ is user $n$'s   energy   constraint. 
 We note that $T_n$ is zero if pure NOMA is used, and   the constraint in Eq. \eqref{5b} implies that   user $n$'s power during $D_m$ needs to ensure that user $m$ experiences the same rate as in OMA.  Define $ E_1=D_m\left(e^{\frac{N}{D_m}}-1\right)  |h_n|^{-2}$ and {\small $E_2= D_m\left(e^{\frac{N}{D_m}}-1\right)e^{\frac{N}{D_m}} |h_n|^{-2}$}.  The optimal    power allocation policy  depends on how much energy is available as shown in the following subsections. 
\vspace{-1em}
\subsection{Case $E\geq E_2$}
This  corresponds to the case with   sufficient energy at user $n$ for MEC offloading,  and the minimal $D_m$ can be achieved by using pure NOMA, i.e., $P_{n,2}=0$. The condition for adopting  pure NOMA  is shown  in the following.   Since  $P_{n,2}=0$, all the energy is consumed during the NOMA phase to minimize the delay, which means $P_{n,1}=\frac{E}{D_m}$. Hence, user $n$ is able to offload its task within $D_m$ if
\begin{align}\nonumber
N\leq &D_m\ln\left(1+\frac{P_{n,1}|h_n|^2}{P_m|h_m|^2+1}\right)
\\  \overset{(a)}{=}& D_m\ln\left(1+ e^{-\frac{N}{D_m}}\frac{E}{D_m}|h_n|^2 \right) , \label{constraint 1}
\end{align}
where step (a) is obtained by assuming that user $m$'s power, $P_m$, satisfies the constraint $D_m\ln\left(P_m|h_m|^2+1\right)=N$.  By solving the inequality in \eqref{constraint 1}, the condition  $E\geq E_2$ can be obtained. 

The performance gain of NOMA-MEC over OMA-MEC  is obvious   in this case since user $n$'s delay in OMA  is $D_m+\frac{N }{\ln\left(1+|h_n|^2P_{n,2}\right) }$ which is strictly larger   than   $D_m$. However,   the comparison between OMA and NOMA becomes more complicated for the energy-constrained cases. 
 
 \vspace{-1em}
 \subsection{Case $ E_1<E< E_2$}\label{subsectionxxx}

This corresponds to the case, where there is not sufficient energy at user $n$ to support pure NOMA. Note that both  H-NOMA and OMA are still applicable. Due to the space limit, we focus on  H-NOMA  ($P_{n,i}\neq0$, $i\in\{1,2\}$), as the OMA solution can be obtained in a straightforward manner.  

Note that $T_n$ is the ratio  of two functions of $P_{n,1}$ and $P_{n,2}$, respectively, which motivates the use of fractional programming. However, compared to conventional  fractional programming in \cite{Dinkelbach}, the problem  in \eqref{5} is more challenging since the fractional function $T_n$ does not only appear in the objective function but also in the constraint. Two iterative algorithms will be developed based on  the following  Dinkelbach's auxiliary function parameterized by $\mu$:
\begin{align}\label{8}
F(\mu) =& \underset{ P_{n,1},P_{n,2}}{\text{maxmize}}\quad \ln\left(1+|h_n|^2P_{n,2}\right) \\\nonumber & -  \mu \left(N- D_m\ln\left(1+ e^{-\frac{N}{D_m}}P_{n,1}|h_n|^2 \right)\right),
\end{align} 
where $\left\{P_{n,1}, P_{n,2}\right\}\in\tilde{\mathcal{S}}(\mu)$ and\footnote{We note that for the case   $E< E_2$, $T_n$ is always non-negative as shown in \eqref{constraint 1}, and hence the constraint, $T_n\geq 0$, can be omitted. } 
{\small \begin{align}\label{s2}
\tilde{\mathcal{S}}(\mu)=\left\{D_mP_{n,1}+\mu^{-1}P_{n,2} \leq E, P_{n,1}\geq 0,P_{n,2}\geq 0\right\}.
\end{align}}
Different from  the original form   in \cite{Dinkelbach}, 
  the constraint set $\tilde{\mathcal{S}}(\mu)$ for the auxiliary function  is also a function of $\mu$ and $F(\mu)$ might have more than one root.

For a fixed $\mu$, the following lemma provides the optimal H-NOMA    solution for problem \eqref{8}.  
\begin{lemma}
For a fixed $\mu$, the optimal H-NOMA power allocation policy for problem \eqref{8} is given by
\begin{eqnarray}\label{solution hnoma}
\left\{\begin{array}{rl}  P_{n,1}^{*}(\mu) &=\frac{E-\mu^{-1}\left(e^{\frac{N}{D_m}}-1\right)|h_n|^{-2}}{D_m+\mu^{-1}}\\
P_{n,2}^{*}(\mu) &= \frac{E+D_m\left(e^{\frac{N}{D_m}}-1\right)|h_n|^{-2}}{D_m+\mu^{-1}}
     \end{array}\right. .
\end{eqnarray} 
\end{lemma}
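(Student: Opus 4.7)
The plan is to recognize problem \eqref{8} for a fixed $\mu>0$ as a convex program: the objective is a sum of two strictly concave logarithms (one in $P_{n,1}$, one in $P_{n,2}$), and the feasible set $\tilde{\mathcal{S}}(\mu)$ is a polytope cut out by linear inequalities. Moreover, the objective is strictly increasing in both variables separately (the derivative with respect to $P_{n,2}$ is $\tfrac{|h_n|^2}{1+|h_n|^2P_{n,2}}>0$, and similarly for $P_{n,1}$), so any maximizer must exhaust the energy budget, i.e.\ the constraint $D_m P_{n,1}+\mu^{-1}P_{n,2}\leq E$ holds with equality. This reduces the search to the interior of the positive orthant for the H-NOMA regime, which justifies ignoring the inequalities $P_{n,i}\geq 0$ and appealing to pure first-order conditions.

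Next I would write the Lagrangian $\mathcal{L}=\ln(1+|h_n|^2P_{n,2})+\mu D_m\ln(1+e^{-N/D_m}P_{n,1}|h_n|^2)-\mu N-\lambda(D_mP_{n,1}+\mu^{-1}P_{n,2}-E)$ with Lagrange multiplier $\lambda>0$, and solve $\nabla_{P_{n,1}}\mathcal{L}=\nabla_{P_{n,2}}\mathcal{L}=0$. The two stationarity conditions decouple and each yields a water-filling-type expression, namely
\begin{align*}
P_{n,1} &= \frac{\mu}{\lambda}-\frac{e^{N/D_m}}{|h_n|^2}, &
P_{n,2} &= \frac{\mu}{\lambda}-\frac{1}{|h_n|^2}.
\end{align*}
Subtracting these gives the useful identity $P_{n,2}-P_{n,1}=(e^{N/D_m}-1)|h_n|^{-2}$, which expresses $P_{n,2}$ linearly in $P_{n,1}$ and eliminates the auxiliary variable $\lambda$ without having to compute it explicitly.

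To finish, I would substitute this identity into the binding energy equation $D_mP_{n,1}+\mu^{-1}P_{n,2}=E$, obtaining a single linear equation in $P_{n,1}$ whose solution is exactly the first line of \eqref{solution hnoma}; back-substitution then yields $P_{n,2}^{*}(\mu)$ as in the second line. Since the program is convex, the KKT point is the unique global maximizer, so no second-order check is needed.

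The only subtlety, rather than an actual obstacle, is that the formulas in \eqref{solution hnoma} are derived under the working assumption that both $P_{n,i}^{*}(\mu)$ are strictly positive (the defining property of H-NOMA). I would point out that $P_{n,2}^{*}(\mu)>0$ is automatic, while $P_{n,1}^{*}(\mu)>0$ requires $\mu E|h_n|^2>e^{N/D_m}-1$; in the regime $E_1<E<E_2$ of Section \ref{subsectionxxx}, this condition is verified by the values of $\mu$ generated by the outer iterations on $F(\mu)$, so the lemma correctly characterises the inner maximiser whenever the H-NOMA mode is the one of interest.
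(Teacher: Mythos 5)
Your proof is correct and follows essentially the same route as the paper: convexity of \eqref{8} for fixed $\mu$, KKT/Lagrangian stationarity, the binding energy constraint, and algebraic elimination to reach \eqref{solution hnoma}. The only cosmetic difference is that you justify the binding constraint by monotonicity of the objective while the paper deduces $\lambda_1\neq 0$ from the stationarity condition in $P_{n,2}$; both are valid and equivalent in substance.
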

\begin{proof}
Due to the space limit, only  a  sketch of the proof is provided. 
It is straightforward to show that problem \eqref{8} is convex for a fixed $\mu$. Hence the Karush-Kuhn-Tucker (KKT) can be applied to find the optimal solution as follows: \cite{Boyd}
 {\small \begin{eqnarray}\nonumber
\left\{\begin{array}{rl} -\frac{\mu D_m e^{-\frac{N}{D_m}} |h_n|^2}{1+ e^{-\frac{N}{D_m}}P_{n,1}|h_n|^2}+\lambda_1D_m -\lambda_2-\lambda_3 &=0\\-\frac{|h_n|^2}{1+|h_n|^2P_{n,2}} +\lambda_1\mu^{-1} -\lambda_2-\lambda_3&=0 \\  D_mP_{n,1}+\mu^{-1}P_{n,2}&\leq E \\ 
     \lambda_1\left(D_mP_{n,1}+\mu^{-1}P_{n,2}- E \right)&=0 \\
     P_{n,i}&\geq 0, \forall i \in\{1,2\}\\
     \lambda_iP_{n,i}&=0, \forall i \in\{2,3\}\\
     \lambda_i&\geq 0, \forall i \in\{1, 2,3\}
     \end{array}\right.\hspace{-2em},
\end{eqnarray}}
\hspace{-0.5em}where $\lambda_i$ are Lagrange multipliers. 
For   the H-NOMA case,  $P_{n,1}> 0 $ and $P_{n,2}> 0$, and hence  $\lambda_2=0$ and $\lambda_3=0$ due to   constraints $ \lambda_iP_{n,i}=0, \forall i \in\{2,3\}$. Therefore, the KKT conditions can be simplified as follows:
\begin{eqnarray}
\left\{\begin{array}{rl} -\frac{\mu D_m e^{-\frac{N}{D_m}} |h_n|^2}{1+ e^{-\frac{N}{D_m}}P_{n,1}|h_n|^2}+\lambda_1D_m  &=0\\-\frac{|h_n|^2}{1+|h_n|^2P_{n,2}} +\lambda_1\mu^{-1}  &=0 \\  D_mP_{n,1}+\mu^{-1}P_{n,2}&\leq E \\ 
     \lambda_1\left(D_mP_{n,1}+\mu^{-1}P_{n,2}- E \right)&=0 \\
     P_{n,i}&\geq 0, \forall i \in\{1,2\}\\
     \lambda_1&\geq 0
     \end{array}\right.\hspace{-1em}.
\end{eqnarray}
Due to constraint $-\frac{|h_n|^2}{1+|h_n|^2P_{n,2}} +\lambda_1\mu^{-1}  =0 $,  we have $\lambda_1\neq 0$ for the optimal solution, otherwise   $\frac{|h_n|^2}{1+|h_n|^2P_{n,2}}=0$ which cannot be true. Since $\lambda_1\neq 0$ and $ \lambda_1\left(D_mP_{n,1}+\mu^{-1}P_{n,2}- E \right)=0$, we have $  D_mP_{n,1}+\mu^{-1}P_{n,2}- E  =0$. With some algebraic manipulations, the lemma can be proved. 
\end{proof}

{\it Remark 1:} By substituting  $P_{n,1}^{*}(\mu)$ and $P_{n,2}^{*}(\mu)$ into \eqref{8}, $F(\mu)$ can be expressed as an explicit function of $\mu$. Unlike \cite{Dinkelbach},   $F(\mu)$ is not convex and may have multiple roots, which result in fundamental changes to the proof of   convergence of   Dinkelbach's method. 

Although $F(\mu)$ is different from the form in \cite{Dinkelbach},  a modified Dinkelbach's method can still be   developed as shown in Algorithm \ref{algorithm}.   In Algorithm \ref{algorithm}, $\delta$ denotes a small positive    threshold. In addition, a   Newton's method-based iterative   algorithm can also be developed   as in Algorithm \ref{algorithm2}, where $F'(\mu)$ denotes the first order derivative of $F(\mu)$.

 \begin{algorithm}
\caption{ Dinkelbach's Method Based Algorithm} 
\begin{algorithmic}[1]
 
\State Set $t=0$, $\mu_0=+\infty$  , $\delta\rightarrow 0$.  
\While { $F(\mu_{t})<-\delta$  }
\State $t=t+1$. 
\State  Update $P_{n,1}^t$ and $P_{n,2}^t$ by using     $\mu_{t-1}$ and  \eqref{solution hnoma}. 
\State Update    $F(\mu_t)$ by using $P_{n,1}^t$ and $P_{n,2}^t$. 
\State Update $\mu_t$ as $\mu_t= \frac{\ln\left(1+|h_n|^2P^t_{n,2}\right)}{   N- D_m\ln\left(1+ e^{-\frac{N}{D_m}}P_{n,1}^t|h_n|^2 \right)} $

\EndWhile
\State \textbf{end}
\State  $P_{n,1}^{*}=P_{n,1}^t$ and $P_{n,2}^{*}=P_{n,2}^t$. 
\end{algorithmic}\label{algorithm}
\end{algorithm}

 \begin{algorithm}
\caption{Newton's Method Based Algorithm} 
\begin{algorithmic}[1]
 
\State Set $t=0$,  $\mu_0=+\infty$, $\delta\rightarrow 0$.
\While { $F(\mu_{t})<-\delta$  }
\State $t=t+1$.  
\State Update $\mu_{t+1}=\mu_t -\frac{F(\mu_t)}{F'(\mu_t)}$. 

\EndWhile
\State \textbf{end}
\State  $P_{n,1}^{*,N}$ and $P_{n,2}^{*,N}$ are obtained by using $\mu_t$ and \eqref{solution hnoma}. 
\end{algorithmic}\label{algorithm2}
\end{algorithm}\vspace{-1em}

The following theorem confirms  the optimality of the two developed algorithms.  

\begin{theorem}\label{theorem1}
The two iterative algorithms shown in Algorithms \ref{algorithm} and \ref{algorithm2} converge to the same optimal solution of problem \eqref{5}. 
\end{theorem}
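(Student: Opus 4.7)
The plan is to recast both algorithms as scalar iterations for a root of $F(\mu)=0$, and then analyze them using the explicit closed form of $F$ obtained by substituting Lemma~1. As a first step I would establish the equivalence between problem~\eqref{5} and root-finding on $F$. Since $D_m$ is constant, minimizing $D_m+T_n$ is the same as maximizing $\mu=1/T_n$, and under this substitution the original energy constraint $D_mP_{n,1}+T_nP_{n,2}\leq E$ becomes exactly $\tilde{\mathcal S}(\mu)$. A KKT argument identical in structure to the proof of Lemma~1 then shows that any H-NOMA optimum of~\eqref{5} must satisfy $F(\mu^*)=0$ at $\mu^*=1/T_n^*$, with optimal powers recovered by $(P_{n,1}^*(\mu^*),P_{n,2}^*(\mu^*))$; moreover, $P^*(\mu)$ is feasible for~\eqref{5} precisely when $F(\mu)\geq 0$, so the target optimum $\mu^*$ is the largest positive root of $F$.

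The second step is to analyze the scalar $F(\mu)$ through the closed form of Lemma~1. The key structural facts I would verify are continuity of $F$, $F(\mu)\to -\infty$ as $\mu\to\infty$ (because $E<E_2$ keeps the numerator term bounded away from zero), $F(\mu)>0$ for sufficiently small $\mu>0$, and $F'(\mu^*)<0$ since $F$ crosses zero from above at its largest root. Together these imply that $\mu^*$ exists and $F\leq 0$ throughout $[\mu^*,\infty)$.

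For Dinkelbach's Algorithm~\ref{algorithm}, define $g(\mu)=\ln(1+|h_n|^2P_{n,2}^*(\mu))/\bigl(N-D_m\ln(1+e^{-N/D_m}P_{n,1}^*(\mu)|h_n|^2)\bigr)$, so that the update reads $\mu_{t+1}=g(\mu_t)$ and $g(\mu)-\mu$ shares the sign of $F(\mu)$; fixed points of $g$ coincide with zeros of $F$. Starting from $\mu_0=+\infty$, I would argue by induction that $\mu_t\in[\mu^*,\infty)$: the inequality $F(\mu_t)\leq 0$ forces $\mu_{t+1}=g(\mu_t)\leq\mu_t$, while a monotonicity/continuity argument on $g$ using the explicit expressions for $P_{n,i}^*(\mu)$ rules out undershoot below $\mu^*$. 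Hence $\{\mu_t\}$ is monotone decreasing and bounded, and its limit must be a fixed point of $g$ in $[\mu^*,\infty)$, which can only be $\mu^*$. For Newton's Algorithm~\ref{algorithm2}, an analogous monotone convergence from above follows by verifying concavity of $F$ on $[\mu^*,\infty)$ directly from the closed form; the classical Newton analysis then yields monotone convergence to the same $\mu^*$ with the quadratic rate claimed in the introduction. Since both algorithms terminate at the same $\mu^*$ and recover powers via Lemma~1, they produce the identical optimal $(P_{n,1}^*,P_{n,2}^*)$, which proves the theorem.

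The main obstacle is the possible existence of spurious roots of $F$ below $\mu^*$ that could trap the iterations, an issue that does not arise in the classical Dinkelbach setting of~\cite{Dinkelbach}. The way around it is to exploit the initialization $\mu_0=+\infty$, which places the iterates in $[\mu^*,\infty)$ where $\mu^*$ is the only attracting fixed point, combined with the monotonicity of $g$ (for Algorithm~\ref{algorithm}) and the concavity of $F$ (for Algorithm~\ref{algorithm2}) on this interval, which together confine both sequences there throughout the run.
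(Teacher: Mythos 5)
Your plan is sound and, once the asserted facts are verified, it proves the theorem; but your handling of the crucial step—keeping the Dinkelbach iterates from undershooting the target root—is genuinely different from the paper's. The paper computes $F''(\mu)<0$ explicitly, uses the boundary limits of $F$ (this is exactly where $E>E_1$ and $E<E_2$ enter) to get existence and uniqueness of the root $\mu^*$ on $\mu>(e^{N/D_m}-1)|h_n|^{-2}E^{-1}$, and then shows $F(\mu_{t+1})<0$ for Dinkelbach by comparing its step with Newton's: the Newton iterate stays above $\mu^*$ by concavity, and the Dinkelbach step $F(\mu_t)/B(\mu_t)$ is shorter because $B(\mu_t)>A'(\mu_t)-\mu_tB'(\mu_t)>0$; that comparison is also what delivers Corollary \ref{corollary1}. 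You instead view the Dinkelbach update as a fixed-point iteration $\mu_{t+1}=g(\mu_t)$ and rely on monotonicity of $g$ plus $g(\mu^*)=\mu^*$; this does work, since by \eqref{solution hnoma} both $P_{n,1}^{*}(\mu)$ and $P_{n,2}^{*}(\mu)$ are increasing in $\mu$ and the denominator $N-D_m\ln\bigl(1+e^{-N/D_m}|h_n|^2P_{n,1}^{*}(\mu)\bigr)$ remains positive when $E<E_2$, so $g$ is increasing on $[\mu^*,\infty)$ and $\mu_t\ge\mu^*$ propagates by induction—a clean alternative that avoids the Newton comparison, at the price of not directly yielding the convergence-speed statement of Corollary \ref{corollary1}. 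To make your sketch complete you must still supply three items the paper makes explicit: (i) concavity of $F$ (needed both for your Newton argument and for strictness of $F'(\mu^*)<0$; ``crosses zero from above'' alone only gives $F'(\mu^*)\le 0$), which requires the second-derivative computation; (ii) existence of the root: ``$F(\mu)>0$ for sufficiently small $\mu>0$'' is not the right statement, because the closed form \eqref{solution hnoma} is valid only for $\mu>(e^{N/D_m}-1)|h_n|^{-2}E^{-1}$, and positivity of $F$ at that left endpoint is precisely where the subsection's assumption $E>E_1$ is used; and (iii) the direction showing that a root $\mu$ of $F$ produces a point feasible for \eqref{5} achieving $T_n=1/\mu$ (your ``feasible precisely when $F(\mu)\ge 0$'' is stated too strongly, though only this implication is needed). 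With those filled in, your route establishes Theorem \ref{theorem1}.
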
 

\vspace{-1em}
\begin{proof}
We start the proof by first studying the roots of the function in \eqref{8} which can potentially be  the optimal solution.  We note that  the steps provided in \cite{Dinkelbach} cannot be straightforwardly applied to prove the optimality of the modified Dinkelbach's method  since $F(\mu)$ may have multiple roots.  

{\it \underline{Step 1}} of the proof is to show the existence and uniqueness of the root of $F(\mu)$, for $ \left(e^{\frac{N}{D_m}}-1\right)|h_n|^{-2} E^{-1}<\mu<\infty$.  To prove this, it is sufficient to show that 1): $F(\mu)$ is strictly concave, 2): $F(\mu)=-\infty$ for $\mu\rightarrow +\infty$ and 3): $F(\mu)>0$ for $\mu\rightarrow \left(e^{\frac{N}{D_m}}-1\right)|h_n|^{-2} E^{-1}$.

The first order derivative of $F(\mu)$ is given by
\begin{align}\label{first derival}
 F'(\mu)  = &\left.\frac{|h_n|^2E+D_m\left(e^{\frac{N}{D_m}}-1\right)}{e^{\frac{N}{D_m}}D_m\mu +|h_n|^2E\mu+1}-\right[N - D_m  \\\nonumber &\left. \times \ln\left(1+ e^{-\frac{N}{D_m}}|h_n|^2 \frac{E-\frac{\left(e^{\frac{N}{D_m}}-1\right)|h_n|^{-2}}{\mu}}{D_m+\frac{1}{\mu}}\right)\right].
\end{align}
Unlike \cite{Dinkelbach}, $ F'(\mu)$ is neither strictly negative nor positive. The second order derivative of $F(\mu)$ is given by
\begin{align}
F''(\mu)  = &\frac{\left(D_m^2 - \left(e^{\frac{N}{D_m}}D_m +|h_n|^2E\right)^2\right)}{\left(e^{\frac{N}{D_m}}D_m\mu +|h_n|^2E\mu+1\right)^2(\mu D_m+1)} .
\end{align}
Since $e^{\frac{N}{D_m}}>1$ and $|h_n|^2E>0$, we have
\begin{align} \label{ea1}
 F''(\mu)  <0,
\end{align}
which means that $F(\mu)$ is a strictly concave function of $\mu$.

When $\mu\rightarrow \left(e^{\frac{N}{D_m}}-1\right)|h_n|^{-2} E^{-1}$, the H-NOMA solution approaches  $P_{n,1}^*(\mu)\rightarrow 0$ and $P_{n,2}^*(\mu)\rightarrow \left(e^{\frac{N}{D_m}}-1\right)|h_n|^{-2}$. Consequently, $F(\mu)$ can be approximated as follows:
\begin{align} \nonumber 
F(\mu) \rightarrow&   \ln\left(1+|h_n|^2\frac{E\mu+\mu D_m\left(e^{\frac{N}{D_m}}-1\right)|h_n|^{-2}}{D_m\mu+1}\right)\\\nonumber &-  \mu  N \\ 
\rightarrow &   \frac{N}{D_m} -\frac{N\left(e^{\frac{N}{D_m}}-1\right)|h_n|^{-2}}{E}. 
\end{align}
As suggested by the title of Section \ref{subsectionxxx}, it is assumed that $ E>E_1$, which means  
\begin{align} \label{ea2}
F(\mu)  >0,   
\end{align}
for $\mu\rightarrow \left(e^{\frac{N}{D_m}}-1\right)|h_n|^{-2} E^{-1} $. 

When $\mu\rightarrow \infty$, $F(\mu)$ can be approximated as follows:
\begin{align} \label{ea3}
F(\mu)  \rightarrow&   \ln\left(1+|h_n|^2\frac{E +  D_m\left(e^{\frac{N}{D_m}}-1\right)|h_n|^{-2}}{D_m}\right) \\   \nonumber-  &\mu \left(N- D_m\ln\left(1+ e^{-\frac{N}{D_m}} |h_n|^2 \frac{E  }{D_m }\right)\right)  \rightarrow -\infty,
\end{align}
where the last step is obtained since for H-NOMA $\left(N- D_m\ln\left(1+ e^{-\frac{N}{D_m}} |h_n|^2 \frac{E  }{D_m }\right)\right) >0$. 
Combining \eqref{ea1}, \eqref{ea2}, and \eqref{ea3}, the proof for the first step is complete. The unique root of $F(\mu)$ for $\mu>\left(e^{\frac{N}{D_m}}-1\right)|h_n|^{-2} E^{-1}$ is denoted by  $\mu^*$, where $\mu^*>\left(e^{\frac{N}{D_m}}-1\right)|h_n|^{-2} E^{-1}$. 

{\it \underline{Step 2}} is to show that $T_n^*\triangleq \frac{1}{\mu^*}$ is the optimal solution of problem \eqref{5}. This step can be proved by using contradiction. Assume that the optimal solution of problem \eqref{5} is     smaller than $T_n^*$, denoted by $\bar{T}_n^*$. Denote $\bar{\mu}^*=\frac{1}{\bar{T}_n^*}$ and hence  $\bar{\mu}^*>\mu^*$. Following steps similar to those in \cite{Dinkelbach},  we have $F(\bar{\mu}^*)=0$, i.e.,     $\bar{\mu}^*$ is also a   root of $F(\mu)$ and $\bar{\mu}^*>\mu^*>\left(e^{\frac{N}{D_m}}-1\right)|h_n|^{-2} E^{-1}$, which   contradicts    the uniqueness of the   root of $F(\mu)$.  Step 2 means that finding the optimal solution of problem \eqref{5} is equivalent to finding  the largest   root of $F(\mu)$. Therefore, the optimality of   Newton's method can be straightforwardly shown, and in the following, we only focus on the modified Dinkelbach's method. 

{\it \underline{Step 3}} is to show $F(\mu)$  is a strictly  decreasing function of $\mu$, for $\mu^*\leq \mu\leq \infty$, which can be proved by using the facts that   $\mu^*$ is the unique  root of $F(\mu)$ for $\mu>\left(e^{\frac{N}{D_m}}-1\right)|h_n|^{-2} E^{-1}$,  and $F(\mu)$ is concave.

{\it \underline{Step 4}} is to show $F(\mu_{t+1})<0$ if $F(\mu_{t})<0$    for  the modified Dinkelbach's method\footnote{The simple proof derived for Lemma 5 in \cite{Dinkelbach} cannot be used  for the considered problem since the constraint set $\tilde{\mathcal{S}}(\mu)$ is now a function of $\mu$.  }. Since $F(\mu_{t})<0$ and $F(\mu)$ is a strictly decreasing function for $\mu\geq \mu^*$, we have $\mu_t>\mu^*$.   The convergence  analysis for   Newton's method is provided first to facilitate that of Dinkelbach's method.   The quadratic convergence analysis for Newton's  method yields the following \cite{Gaussiandd}
\begin{align}
\mu^*-\mu_{t+1} = -\frac{F''(\mu_{\xi})}{2F'(\mu_t)} (\mu^*-\mu_t)^2,
\end{align}
where $\mu^*\leq \mu_{\xi} \leq \mu_{t}$.   Note that $F'(\mu_t)<0$ and $F''(\mu_{\xi})<0$. Therefore, we have
\begin{align}
\mu^*-\mu_{t+1} < 0,
\end{align}
which means $F(\mu_{t+1})< 0$ for Newton's method. 

For   Dinkelbach's method,    $F(\mu)$ is first expressed  as  $F(\mu)=A(\mu)-\mu B(\mu)$, which means that $\mu$ is updated as  follows:
\begin{align}\label{+1 dm}
\mu_{t+1}=\frac{A(\mu_t)}{B(\mu_t)} = \mu_t+\frac{F(\mu_t) }{B(\mu_t)}.
\end{align}
 
Recall  that Newton's method  updates  $\mu$ as follows:
\begin{align}\label{+1 nm}
\mu_{t+1} = \mu_t -\frac{F(\mu_t)}{F'(\mu_t)}=\mu_t +\frac{F(\mu_t) }{B(\mu_t)-(A'(\mu_t)-\mu_tB'(\mu_t))}.
\end{align}
From \eqref{first derival}, $A'(\mu_t)-\mu_tB'(\mu_t)$ can be expressed as follows: 
\begin{align}
A'(\mu_t)-\mu_tB'(\mu_t)=\frac{|h_n|^2E+D_m\left(e^{\frac{N}{D_m}}-1\right)}{e^{\frac{N}{D_m}}D_m\mu +|h_n|^2E\mu+1}>0.
\end{align}
Note that $B(\mu)>(A'(\mu_t)-\mu_tB'(\mu_t))>0$ since $F'(\mu_t)<0$. 
Therefore, the step size of Newton's method is larger than that of Dinkelbach's method. In other words, starting with the same $\mu_t$,   $\mu_{t+1}$ obtained  from Newton's method is smaller than that of Dinkelbach's method, which means $F(\mu_{t+1})<0$ also holds for Dinkelbach's method.

{\it \underline{Step 5}} is to show $ \mu_{t+1}<\mu_{t}$, which can be proved  by using   \eqref{+1 dm} and \eqref{+1 nm}. An interesting  property by Steps 4 and 5 is that if the initial value of $\mu$ is set as $\infty$, $\mu_t$ is decreasing and approaches  $\mu^*$ as the number of  iterations increases, but will never pass $\mu^*$, as $F(\mu_t)$ is always negative. 

{\it \underline{Step 6}} is to show that Dinkelbach's method  converges to $\mu^*$.  This can be proved by contradiction. Assume  that the iterative algorithm converges to $\breve{\mu}$, i.e., $\underset{t\rightarrow \infty}{\lim}\mu_t=\breve{\mu}$ and $\breve{\mu}\neq \mu^*$. Although $F(\mu)$ might have more than one root, $F(\mu_{t})$ is always negative if $\mu_0=\infty$, as illustrated by Step 4. Therefore, $\mu_t$ is always larger than $\mu^*$, i.e.,   $\breve{\mu}>\mu^*$. Since $F(\mu)$ is strictly decreasing for $\mu>\mu^*$, we have the conclusion that $0=F(\breve{\mu})<F(\mu^*)=0$, which cannot be true. The proof is complete. 
\end{proof}

\begin{corollary}\label{corollary1}
The algorithm based on Newton's method converges faster than the one based on  Dinkelbach's method. 
\end{corollary}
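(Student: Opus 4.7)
The plan is to build directly on the per-iteration comparison already established in Step 4 of the proof of Theorem \ref{theorem1}. Both algorithms can be written in the common form $\mu_{t+1}=\mu_t+F(\mu_t)/D(\mu_t)$, where the Dinkelbach denominator is $D_D(\mu_t)=B(\mu_t)$ by \eqref{+1 dm} and the Newton denominator is $D_N(\mu_t)=B(\mu_t)-(A'(\mu_t)-\mu_t B'(\mu_t))$ by \eqref{+1 nm}. Step 4 already supplies the two crucial facts needed: $F(\mu_t)<0$ holds throughout the iterations that start at $\mu_0=+\infty$, and $0<D_N(\mu_t)<D_D(\mu_t)$ because $A'(\mu_t)-\mu_t B'(\mu_t)>0$ while the strict inequality $B(\mu_t)>A'(\mu_t)-\mu_t B'(\mu_t)$ follows from $F'(\mu_t)<0$.

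First, I would observe that, from any common iterate $\mu_t\in(\mu^*,\infty)$, the negative increment $F(\mu_t)/D_N(\mu_t)$ applied by Newton is strictly more negative than $F(\mu_t)/D_D(\mu_t)$ applied by Dinkelbach, so Newton moves strictly closer to $\mu^*$ in a single step while neither iterate overshoots, by Step 5. I would then proceed by induction on $t$, initialized at the common value $\mu_0=+\infty$: at each stage, I compare the two updated iterates $\mu_{t+1}^{N}$ and $\mu_{t+1}^{D}$ and conclude $\mu^*\le \mu_{t+1}^{N}\le \mu_{t+1}^{D}\le \mu_t$, using the per-step inequality above together with the monotonicity of the map $\mu\mapsto \mu+F(\mu)/B(\mu)$ in the Dinkelbach branch and $\mu\mapsto \mu-F(\mu)/F'(\mu)$ in the Newton branch on $[\mu^*,\infty)$, both of which follow from the concavity of $F$ established in \eqref{ea1}.

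To sharpen ``faster'' into a genuine rate statement, I would invoke the quadratic-convergence identity already displayed in Step 4, namely $\mu^*-\mu_{t+1}^{N}=-F''(\mu_\xi)/(2F'(\mu_t))(\mu^*-\mu_t)^2$, which shows that $|\mu^*-\mu_{t+1}^{N}|$ decays at second order in $|\mu^*-\mu_t|$. In contrast, the Dinkelbach iteration at $\mu^*$ is a fixed-point map with derivative equal to $1-F'(\mu^*)/B(\mu^*)\in(0,1)$, so its error contracts only linearly. Combining these two rates with the initial gap $\mu^*-\mu_0$ being identical for both algorithms implies that the termination criterion $F(\mu_t)\ge-\delta$ is attained in strictly fewer iterations by Newton's method, which is precisely the statement of the corollary.

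The main obstacle, as I see it, is propagating the per-step comparison beyond a single iteration: once the two sequences separate, the pointwise inequality $D_N(\mu_t)<D_D(\mu_t)$ refers to a common $\mu_t$ only. I would address this by exploiting the monotonicity of each update map on $[\mu^*,\infty)$, which guarantees that if $\mu_t^{N}\le \mu_t^{D}$ then the respective next iterates preserve the ordering, so the inductive argument closes without any additional coupling machinery.
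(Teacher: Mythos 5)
Your proposal is correct and follows essentially the same route as the paper, whose proof of the corollary is precisely the per-step comparison of Step 4 (Newton's increment $F(\mu_t)/\bigl(B(\mu_t)-(A'(\mu_t)-\mu_tB'(\mu_t))\bigr)$ versus Dinkelbach's $F(\mu_t)/B(\mu_t)$ from a common iterate, with $\mu_0=+\infty$ and no overshoot of $\mu^*$); your added induction with monotonicity of the update maps merely makes explicit the propagation across iterations that the paper leaves implicit. One small slip in your auxiliary rate argument: the Dinkelbach fixed-point derivative at $\mu^*$ is $1+F'(\mu^*)/B(\mu^*)=\bigl(A'(\mu^*)-\mu^*B'(\mu^*)\bigr)/B(\mu^*)\in(0,1)$, not $1-F'(\mu^*)/B(\mu^*)$, and the monotonicity of the Dinkelbach map comes from $A(\mu)$ increasing and $B(\mu)$ decreasing (both positive) rather than from concavity of $F$ — neither correction affects your conclusion.
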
 
\begin{proof}
The corollary can be proved by using   Step 4 in the proof for Theorem \ref{theorem1}. 
\end{proof}

{\it Remark 2:} Following Step 4 in the proof for Theorem \ref{theorem1}, one can also establish the equivalence between     Dinkelbach's method and    Newton's method for the classical  problem  in \cite{Dinkelbach}. 
 
{\it Remark 3:} The rationale behind  the  existence condition of the H-NOMA solution,  i.e.,  $ E>E_1$,  can be explained as  follows. From the proof of   Theorem \ref{theorem1}, we learn that $F(\mu)$ is a decreasing function of $\mu$ for $\mu>\mu^*$. On the other hand, $P_{n,1}^{*}(\mu)$ and $P_{n,2}^{*}(\mu)$ are   increasing functions of $\mu$. So it is important to ensure that  when $\mu$ is reduced to a value $\bar{\mu}$, such that  $P_{n,1}^{*}(\bar{\mu})=0$, i.e.,   $\bar{\mu}= \frac{e^{\frac{N}{D_m}}-1}{E|h_n|^2}$, $F(\mu)$ needs to be positive. Otherwise, a positive root of $F(\mu)$ does not exist, and there is no feasible H-NOMA solution. By substituting $\bar{\mu}$ into \eqref{solution hnoma} and solving the inequality $F(\bar{\mu})>0$, the existence condition is obtained.   Similarly, one can find that $E\geq  N |h_n|^{-2}$ is the condition under which OMA is feasible. Hence, OMA-MEC is the only feasible option to minimize the delay if $  N |h_n|^{-2}\leq E\leq E_1$, i.e., there is very limited energy available for MEC offloading. 

 {\it Remark 4:}  For   the case $ E_1<E< E_2$, both OMA and NOMA are applicable.  Simulation results show that H-NOMA always yields less delay than OMA,  although we have yet to obtain a formal proof for this conjecture. 
 
 \section{Numerical Studies}
In this section, the performance of the proposed MEC offloading scheme is studied and compared by using computer simulations, where the normalized channel gains are adopted for the  purpose of clearly demonstrating  the impact of the channel conditions on the delay. In Fig. \ref{fig1}, the impact of NOMA on the MEC offloading delay is shown as a function of the   energy consumption. Note that the curves for NOMA-MEC are generated based on the combination of H-NOMA and pure NOMA, i.e., if  $E\geq D_m\left(e^{\frac{N}{D_m}}-1\right)e^{\frac{N}{D_m}} |h_n|^{-2}$, pure NOMA is used, otherwise H-NOMA is used. For a given amount of energy consumed, Fig. \ref{fig1} shows that the use of NOMA reduces the delay significantly. Particularly when there is plenty of energy available at user $n$, i.e., $E\geq D_m\left(e^{\frac{N}{D_m}}-1\right)e^{\frac{N}{D_m}} |h_n|^{-2}$, the use of NOMA ensures that $D_m$ is sufficient for offloading and there is no need to exploit extra time.  On the other hand, when the available energy  for MEC offloading is reduced, the delay performances of NOMA and OMA become similar.  Fig. \ref{fig2} provides a comparison of the convergence rates of the two proposed iterative algorithms. Note that both algorithms start with a delay of $0$ ($\mu_0=\infty$) and only the delay after convergence in the figure is achievable. As shown in the figure,  Dinkelbach's method converges generally  slowly than  Newton's method, as predicted by  Corollary \ref{corollary1}, although they perform the same  in conventional scenarios \cite{Dinkelbach}. 
 
\begin{figure}[!htbp]\centering \vspace{-1em}
    \epsfig{file=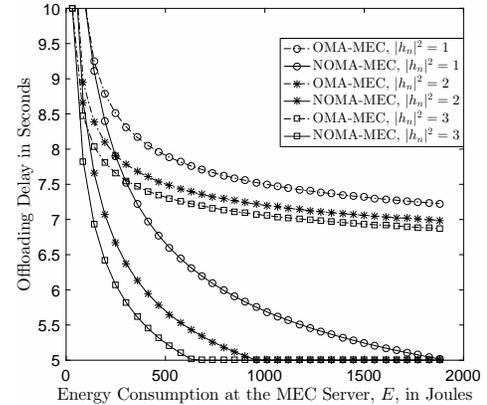, width=0.34\textwidth, clip=}\vspace{-1em}
\caption{ The impact of NOMA on MEC offloading delay. $N=15$ and $D_m=5$.    \vspace{-1em} }\label{fig1}
\end{figure}
 \vspace{-0.5em} 
 
\begin{figure}[!htbp]\centering \vspace{-1em}
    \epsfig{file=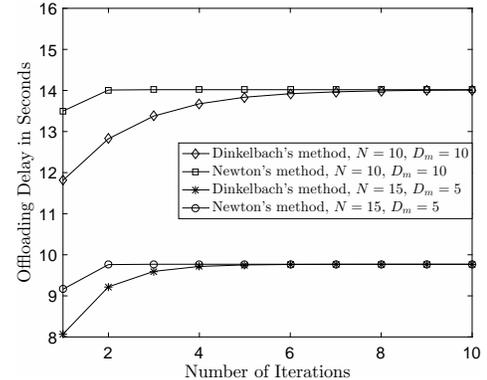, width=0.34\textwidth, clip=}\vspace{-1em}
\caption{ The convergence comparison between the two proposed iterative methods with $|h_n|^2=1$.     \vspace{-1em} }\label{fig2}
\end{figure}
\vspace{-1em}
\section{Conclusions}
In this paper, two iterative algorithms have been developed to minimize the offloading delay of NOMA-MEC. The optimality of the algorithms has been proven  and their rates of convergence were also analyzed.   Furthermore,   criteria for  choosing  between the three possible modes, OMA, pure NOMA, and  H-NOMA, for MEC offloading have been   established. 
    \bibliographystyle{IEEEtran}
\bibliography{IEEEfull,trasfer}

\begin{thebibliography}{1}
\providecommand{\url}[1]{#1}
\csname url@samestyle\endcsname
\providecommand{\newblock}{\relax}
\providecommand{\bibinfo}[2]{#2}
\providecommand{\BIBentrySTDinterwordspacing}{\spaceskip=0pt\relax}
\providecommand{\BIBentryALTinterwordstretchfactor}{4}
\providecommand{\BIBentryALTinterwordspacing}{\spaceskip=\fontdimen2\font plus
\BIBentryALTinterwordstretchfactor\fontdimen3\font minus
  \fontdimen4\font\relax}
\providecommand{\BIBforeignlanguage}[2]{{%
\expandafter\ifx\csname l@#1\endcsname\relax
\typeout{** WARNING: IEEEtran.bst: No hyphenation pattern has been}%
\typeout{** loaded for the language `#1'. Using the pattern for}%
\typeout{** the default language instead.}%
\else
\language=\csname l@#1\endcsname
\fi
#2}}
\providecommand{\BIBdecl}{\relax}
\BIBdecl

\bibitem{MECding}
Z.~Ding, P.~Fan, and H.~V. Poor, ``Impact of non-orthogonal multiple access on
  the offloading of mobile edge computing,'' \emph{IEEE Trans. Commun.},
  (submitted) Available on-line at arXiv:1804.06712.

\bibitem{8269088}
F.~Wang, J.~Xu, and Z.~Ding, ``Optimized multiuser computation offloading with
  multi-antenna {NOMA},'' in \emph{Proc. IEEE Globecom Workshops}, Singapore,
  Dec. 2017, pp. 1--6.

\bibitem{8267072}
A.~Kiani and N.~Ansari, ``Edge computing aware {NOMA} for {5G} networks,''
  \emph{IEEE Internet of Things Journal}, vol.~PP, no.~99, pp. 1--1, 2018.

\bibitem{Zhiguo_MEC1}
Z.~Ding, J.~Xu, O.~A. Dobre, and H.~V. Poor, ``Joint power and time allocation
  for {NOMA-MEC} offloading,'' \emph{IEEE Wireless Commun. Lett.}, (submitted)
  Available on-line at arXiv:1807.06306.

\bibitem{Dinkelbach}
W.~Dinkelbach, ``On nonlinear fractional programming,'' \emph{Management
  Science}, vol.~13, no.~7, pp. 492 -- 298, 1967.

\bibitem{Boyd}
S.~Boyd and L.~Vandenberghe, \emph{Convex optimization}.\hskip 1em plus 0.5em
  minus 0.4em\relax Cambridge University Press, Cambridge, UK, 2003.

\bibitem{Gaussiandd}
E.~Hildebrand, \emph{Introduction to Numerical Analysis}.\hskip 1em plus 0.5em
  minus 0.4em\relax Dover, New York, USA, 1987.

\end{thebibliography}

  \end{document}